\tikzset{%
    symbol/.style={%
        ,draw=none
        ,every to/.append style={%
            edge node={node [sloped, allow upside down, auto=false]{$#1$}}}
    }
}
\DeclareMathOperator{\Char}{char}
\DeclareMathOperator{\Gal}{Gal}
\DeclareMathOperator{\Ram}{Ram}
\DeclareMathOperator{\Id}{Id}
\def\vF{\mathbb{F}}
\def\vN{\mathbb{N}}
\def\vZ{\mathbb{Z}}
\def\cO{\mathcal{O}}
\def\cS{\mathcal{S}}
\theoremstyle{definition}
\newtheorem{definition}{Definition}[section]
\newtheorem{remark}[definition]{Remark}
\theoremstyle{plain}
\newtheorem{theorem}[definition]{Theorem}
\newtheorem{proposition}[definition]{Proposition}
\newtheorem*{theorem*}{Theorem}
\newtheorem{problem}[definition]{Problem}
\renewcommand{\email}[2][]{%
  \ifx\emails\@empty\relax\else{\g@addto@macro\emails{,\space}}\fi%
  \@ifnotempty{#1}{\g@addto@macro\emails{\textrm{(#1)}\space}}%
  \g@addto@macro\emails{#2}%
}
\author[Giacomo Micheli]{Giacomo Micheli}
\address{Mathematical Institute\\
University of Oxford\\
Woodstock Rd\\ 
Oxford OX2 6GG, United Kingdom
}
\title[Constructions of Locally Recoverable Codes which are Optimal]{Constructions of Locally Recoverable Codes which are Optimal}
\thanks{The author was supported by the Swiss National Science Foundation grant number 171248 and 171249.}
\subjclass[2010]{11T06, 11H71, 11T71, 68P30, 14G50.}
\keywords{Chebotarev Density Theorem; optimal locally recoverable codes; Finite fields; Galois Theory.}
\begin{document}

\begin{abstract}
Let $q$ be a prime power and $\vF_q$ be the finite field of size $q$. In this paper we provide a Galois theoretical framework that allows to produce good polynomials for the Tamo and Barg construction of optimal locally recoverable codes (LRC). Using our approach we construct new good polynomials and then optimal LRCs with new parameters. The existing theory of good polynomials fits entirely in our new framework. The key advantage of our method is that we do not need to rely on arithmetic properties of the pair $(q,r)$, where $r$ is the locality of the code.
\end{abstract}

\maketitle

\section{Introduction}
Let $n,k,r$ be positive integers with $k\leq n$.
A locally recoverable code (LRC) $\mathcal C$ having parameters $(n,k,r)$ is an $\vF_q$-subspace of $\vF_q^n$ of dimension $k$ such that, if one deletes one component of any $v\in \mathcal C$, this can be recovered by accessing at most $r$ other components of $v$. LRCs are of great interest in the context of distributed and cloud storage systems \cite{ballico2016higher,barg2015locally,
cadambe2013upper,gopalan2014explicit,papailiopoulos2013repair,
prakash2012optimal,silberstein2013optimal,tamo2014bounds,tamo2016optimal}. One of the most interesting constructions of LRC is due to Tamo and Barg \cite{tamo2014family} and is realised via constructing polynomials of degree $r+1$ which are constants on subsets of $\vF_q$ of cardinality $r+1$. These polynomials are called \emph{good} polynomials.
Constructions of good polynomials are also provided in \cite{liu2018new}. All these constructions are mostly based on arithmetic properties of the pair $(q,r)$ (for a survey of previous constructions and some new ones, all with such restrictions, see \cite[Section III]{liu2018new}). One of the purposes of this paper is to drop these constraints using a general density approach based on algebraic number theory.

As often happens in coding theory, when one is allowed to go over extension fields problems become easier: for example, in the case in which $r+1$ is coprime with $q$, one is always able to fix a good monomial of degree $r+1$ as long as one fixes an extension $\vF_{q^d}$ such that $r+1\mid q^d-1$ (but in this case $d$ might even have size similar to the one of $r$ when $q$ has large multiplicative order in $(\vZ/(r+1)\vZ)^*$).
Also, it should be noticed that, whenever available (i.e. when $r+1\mid q-1$) the monomial $x^{r+1}$ is one of the optimal choices, as the sets of size $r+1$ where $x^{r+1}$ is constant cover entirely $\vF_q^*$ (see also Remark \ref{rem:monomial}).  One the challenges here is to fix a certain base field $\vF_q$ and construct an optimal code with locality $r$, and large length and dimension. To achieve this using the construction of Tamo and Barg, we need a good polynomial of degree $r+1$ that is constant on as many subsets of $\vF_q$ as possible (this will determine the range of length and dimensions that are available for a fixed base field $\vF_q$ and a given locality $r$).

In this paper we fit the theory of good polynomials in a Galois theoretical context, showing that finding polynomials that are good can be reduced to solve a Galois theory problem. Moreover, existing constructions of good polynomials fit completely in our context and can be derived by our main theorems in Section \ref{sec:GalThFF}.

In addition, using the same method we provide existential results in Subsection \ref{sec:existencegood} and explicit new constructions with new parameters in Subsection \ref{sec:selection}.
In Section \ref{sec:examples} we put the method in practice constructing good polynomials for many different base fields. All the results are based on estimates provided by the Chebotarev Density Theorem for global function fields, which are also supported by the tables in Section \ref{sec:examples}.

We explain now in a nutshell the key idea of our constructions.
We are interested in polynomials $f$ of degree $r+1$ which are constants on disjoint sets $A_i\subset \vF_q$ (i.e. $f(A_i)=\{t_i\}$, for some $t_i\in \vF_q$) each of cardinality $r+1$, for $i\in\{1,\dots \ell\}$, for some $\ell\in \vN$.  One of the tasks is to maximise $\ell$, as this will allow to build LRCs with large length and dimension, as it is explained in Theorem \ref{thm:LRCconstruction} (of course, it is trivial to get $\ell=1$, as it is enough to take a totally split polynomial).
The fundamental observation is that $\ell$ is exactly the number of totally split places of degree $1$ of the global function field extension $\vF_q(x):\vF_q(t)$, where $t$ is trascendental over $\vF_q$ and $x$ is a zero of $f-t$ over the algebraic closure of $\vF_q(t)$.
Now, a place is totally split in $\vF_q(x):\vF_q(t)$ if and only if it is totally split in $M_f:\vF_q(t)$, where $M_f$ is the Galois closure of the extension $\vF_q(x):\vF_q(t)$. Let $G_f$ be the Galois group of $M_f:\vF_q(t)$.
By the Chebotarev Density Theorem (and under some additional technical requirements), the number of totally split places is roughly of the size of $q/\# G_f$. Therefore, the entire task of finding good polynomials relies on finding polynomials $f\in \vF_q[X]$ with minimal Galois group $G_f=\Gal(M_f:\vF_q(t))$. 

The strenght of the method we propose relies on the fact that we do not need to base our constructions on algebraic properties of the base field as in \cite{liu2018new,tamo2014family} but we extract good polynomials with a density argument. Let us now show with a toy example a class of good polynomials that deceives the constructions in \cite{liu2018new,tamo2014family}.
Suppose that we want to construct via good polynomials a $(n,k,2)$-LRC over an alphabet of size $q$, with $q\equiv 2 \mod 3$ using one of the constructions in \cite{liu2018new}. One would need a degree $3$ polynomial that is totally split at least at some place $t_0\in \vF_q$. But then this cannot be a composition, as its degree is prime and cannot be a linearised polynomial, as its degree is incompatible with the characteristic. Also, it cannot be a power because $x\mapsto x^3$ is a bijection by the congruence class of $q$ modulo $3$, so none of the constructions in \cite[Section III]{liu2018new}  apply. Of course, one could get an $(n,k,2)$-LRC by extending the base field to $\vF_{q^2}$ (instead of $\vF_q$) and using the good polynomial $x^3$. The approach we propose in this paper does not need field extensions and allows to construct good polynomials directly over the field where we want the code: for this specific case $r=2$ see for example Theorem \ref{thm:degree_3_pred}, which always ensures the existence (and constructibility) of a good polynomial of degree $3$ with predictable large $\ell$ without increasing the field size. To show how effective the method is, in the rest of the paper we construct explicitly many good polynomials over many different fields where previous constructions are not available (see for example Proposition \ref{thm:pessimistic} and Theorems  \ref{thm:degree_3_pred}, \ref{thm:x4x2prediction}, \ref{thm:degree_6_pred}).
%

\subsection*{Notation}
For us, a  $(n,k,r)$-LRC is a subspace of $\vF_q^n$ of dimension $k$ with locality $r$, i.e. if a component of a codeword is lost, then it can be recovered by looking at most at $r$ other components.
Let $F,K$ be fields. We denote by $F[X]$ the polynomial ring in the variable $X$ over the field $F$. A field extension $K\subseteq F$ will be denoted by $F:K$ (and not $F/K$, in order not to overlap with other notation) and its degree by $[F:K]$, i.e. the dimension of $F$ as a $K$-vector space. For any $\alpha\in K$ that is not a square, we denote by $\sqrt{\alpha}$ any zero of $X^2-\alpha$ over the algebraic closure of $F$.
Let $p$ be a prime number, $m$ a positive integer, $q=p^m$  and  $\vF_q$ be the finite field of order $q$.
Let $t$ be trascendental over $\vF_q$ and $\vF_q(t)$ be the rational function field in the variable $t$ over the base field $\vF_q$. 
In this paper we use the notation and terminology of \cite{stichtenoth2009algebraic}, which we briefly recall here.
A global function field $F$ over $\vF_q$ is a finite dimensional extension of $\vF_q(t)$. For a place $P$ of $F$, we denote by $\cO_P$ its valuation ring and say that $P$ has degree $1$ if $[\cO_P/P:\vF_q]=1$. We will only deal with global function fields, so the term global will mostly be understood.

For a function field $F$ over $\vF_q$, we denote by $\mathcal P(F)$ (resp. $\mathcal P^1(F)$) the set of places (resp. places of degree $1$) of $F$.
Let $F:K$ be an extension of function fields. Let $P\subseteq K$ be a place of $K$ and $Q\subseteq F$ be a  place of $F$. We say that $P$ lies above $Q$ if $P\subseteq Q$. Moreover we denote by $e(Q\mid P)$ (resp. $f(Q\mid P)$) the ramification index (resp. the relative degree) of the extension of places $Q\mid P$ . 
We say that $f\in \vF_q[X]$ is a separable polynomial if $f\notin \vF_q[X^p]$, in such a way that $f-t$ is a separable irreducible polynomial over $\vF_q(t)$.
We denote by $M_f$ the splitting field of $f-t$ over $\vF_q(t)$, i.e. the Galois closure of the extension $\vF_q(x):\vF_q(t)$, where $x$ is any of the roots of $f-t$ over the algebraic closure of $\vF_q(t)$. We denote by $k_f$ the field of constants of $M_f$, i.e. 
\[k_f=\{u\in M_f: \, \text{$u$ is algebraic over $\vF_q$}\}.\]
Notice that $k_f$ in principle might be non trivial (an example is $k_f$ for $f=x^3$ and $q\equiv 2\mod 3$: in this case $M_f=k_f(t)[x]/(f-t)$ and $k_f=\vF_{q^2}$).
Recall that $\vF_q(x)\cong \vF_q(t)[x]/(f(x)-t)$. Let $G_f$ be the \emph{monodromy group} of $f$, i.e. the Galois group of the the  field extension $M_f:\vF_q(t)$. 
When we refer to the \emph{genus} $g_f$ of $M_f$ we consider $M_f$ as a function field over its field of constants $k_f$. In all interesting cases we will have $k_f=\vF_q$ so this distinction will not affect us.
For an element $g\in G_f$ and a place $P\in \mathcal P^1(\vF_q(t))$ we say that $g$ is a Frobenius for $P$ if there exists a place $R$ lying above $P$ such that $g(R)=R$ and the map $g_R:\cO_R/R\rightarrow \cO_R/R$ acts as $g(y)=y^q$. In particular, we say that the identity is a Frobenius for $P$ if $\cO_R/R\cong \cO_P/P\cong\vF_q$.

In the rest of the paper we will identify the places of degree $1$  of $\vF_q(t)$ with $\vF_q\cup\{\infty\}$. For a finite set $A$, we denote its cardinality by $\# A$. Let us denote the symmetric group of degree $m$ by $\mathcal S_m$ and the alternating group by $\mathcal A_m$.
Let us denote the multiplicative group $\vF_q\setminus \{0\}$ by $\vF_q^*$.
We say that a polynomial $f\in\vF_q[X]$ is totally split if it factors as a product of $\deg(f)$ distinct linear factors.
If $F_1$ and $F_2$ are fields contained in a larger field $L$, we denote by $F_1F_2$ their compositum, i.e. the smallest subfield of $L$ containing $F_1$ and $F_2$.

\section{Locally Recoverable Codes and Good Polynomials}
Let us start with the fundamental definition which is slightly refined compared to \cite{tamo2014family,liu2018new}, as we want to keep track of the size of the covering attached to the good polynomial, as it relates directly to the maximal achievable dimension and the length of the codes constructed with the given good polynomial.
\begin{definition}
Let $f\in \vF_q[X]$ and let $\ell$ be a positive integer. Then $f$ is said to be $(r,\ell)$-good if 
\begin{itemize}
\item $f$ has degree $r+1$,
\item there exist  $A_1,\dots A_\ell$ distinct subsets of $\vF_q$ such that
\begin{itemize}
\item for any $i \in \{1,\dots \ell\}$, $f(A_i)=\{t_i\}$ for some $t_i\in \vF_q$, i.e. $f$ is constant on $A_i$,
\item $\# A_i= r+1$,
\item $A_i\cap A_j=\emptyset$ for any $i\neq j$.
\end{itemize}
\end{itemize}
We say that the family $\{A_1,\dots A_\ell\}$ is a \emph{splitting covering} for $f$. We say that a polynomial is $r$-good if it has degree $r+1$ and it is $(r,\ell)$-good for some $\ell>0$. For simplicity of notation, we allow $\ell$ to be negative or zero, in which case an $(r,\ell)$-good polynomial is simply a polynomial of degree $r+1$.  A polynomial that is \emph{not good} is a polynomial such that there is no $t_0\in \vF_q$ such that $f(X)-t_0$ splits completely in $\deg(f)$ distinct linear factors.
\end{definition}

\begin{remark}\label{rem:monomial}
Observe that if a polynomial of degree $r+1$ is  $(r,\ell)$-good, then  $\ell$ is  at most $\lfloor q/(r+1) \rfloor$.
\end{remark}

\begin{remark}
Notice that an $(r,\ell)$-good polynomial is also $(r,t)$-good for any $t\leq \ell$, as one can simply drop some of the $A_i$'s.
\end{remark}

Let us recall the definition of optimal LRCs \cite{tamo2014family}.
\begin{definition}
A $(n,k,r)$-LRC $\mathcal C$ is said to be \emph{optimal} if the minimum distance $d$ of $\mathcal C$ satisfies 
\[d= n-k-\left\lceil\frac{k}{r}\right\rceil+2.\]
\end{definition}
In fact it can be proven that $n-k-\left\lceil\frac{k}{r}\right\rceil+2$ is the maximum distance achievable by any $(n,k,r)$-LRC \cite[Theorem 1]{papailiopoulos2014locally}.

The following result is \cite[Construction 1]{tamo2014family}. We write it in the format of \cite[Theorem 4]{liu2018new}, which is more convenient for our purposes.

\begin{theorem}\label{thm:LRCconstruction}
Let $r\geq 1$ be a positive integer and $g$ be an  $(r,\ell)$-good polynomial over $\vF_q$ for the set $A=\bigcup^\ell_{i=1}A_i$. Let $t\leq \ell$, $n=(r+1)\ell$ and $k=rt$.
For $a=(a_{i,j}, i=0,\dots r-1;\, j=0,\dots t-1)\in \vF_q^k$, let
\[f_a(X)=\sum_{i=0}^{r-1}\sum_{j=0}^{t-1}a_{i,j}g(X)^jx^i.\]
Define \[\mathcal C=\{(f_a(x),\, x\in A)\,|\, a\in \vF_q^k\}.\]
Then $\mathcal C$ is an optimal $(n,k,r)$-LRC over $\vF_q$.
\end{theorem}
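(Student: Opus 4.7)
The plan is to verify in turn the four parameters of $\mathcal{C}$: length $n=(r+1)\ell$, dimension $k=rt$, locality $r$, and minimum distance $d = n - k - \lceil k/r\rceil + 2$. The length is immediate from the disjointness of the $A_i$ and the fact that each has cardinality $r+1$. The upper bound $d \leq n - k - \lceil k/r\rceil + 2$ is the generalised Singleton-like bound recalled above (\cite[Theorem 1]{papailiopoulos2014locally}), so the real content of the proof is a matching lower bound for $d$, together with the injectivity of the evaluation map $\varphi:a\mapsto (f_a(x))_{x\in A}$.

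For the dimension, I would first observe that the $rt$ polynomials $g(X)^j X^i$, $0\leq i\leq r-1$, $0\leq j\leq t-1$, have pairwise distinct degrees $j(r+1)+i$, and are therefore $\vF_q$-linearly independent; hence $a\mapsto f_a$ is injective. Moreover $\deg f_a \leq (t-1)(r+1)+(r-1) = (r+1)t - 2 < (r+1)\ell = n = \# A$, so any $f_a$ is determined by its values on $A$. Composing, $\varphi$ is injective, giving $\dim_{\vF_q}\mathcal{C} = k$. For the locality, since $g$ takes the constant value $t_i$ on $A_i$, on this fibre the codeword coordinates read
\[f_a(x) \;=\; \sum_{s=0}^{r-1}\Bigl(\sum_{j=0}^{t-1} a_{s,j}\, t_i^{\,j}\Bigr) x^{s} \;=\; h_i(x), \qquad x\in A_i,\]
where $h_i\in \vF_q[X]$ has degree at most $r-1$. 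Since $\# A_i = r+1$, any $r$ of these $r+1$ values determine $h_i$ by Lagrange interpolation, so the missing value can be recovered from any other $r$ coordinates within the same block $A_i$.

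For the distance lower bound, fix $a\neq 0$; by the linear independence of the $g(X)^j X^i$ the polynomial $f_a$ is nonzero, and hence has at most $\deg f_a \leq (r+1)t - 2 = rt + t - 2 = k + \lceil k/r\rceil - 2$ zeros in $\vF_q$ (using $k=rt$ to identify $\lceil k/r\rceil$ with $t$). In particular $f_a$ has at most this many zeros on $A\subseteq \vF_q$, so the weight of $\varphi(a)$ is at least $n - (k + \lceil k/r \rceil - 2) = n - k - \lceil k/r\rceil + 2$, matching the upper bound. The main steps requiring care are the degree bookkeeping for $f_a$ and the identification $t = \lceil k/r \rceil$; beyond the disjointness and cardinality of the $A_i$, no further property of the splitting covering is used in this theorem itself.
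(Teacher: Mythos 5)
The paper does not prove this theorem: it states it as \cite[Construction 1]{tamo2014family}, written in the format of \cite[Theorem 4]{liu2018new}, and simply cites those sources. Your argument is a correct reconstruction of the standard Tamo--Barg proof: the length count, the dimension via linear independence of $g(X)^jX^i$ (distinct degrees $j(r+1)+i$) together with $\deg f_a \leq (r+1)t-2 < n = \#A$, the locality via the degree-$\leq r-1$ restriction $h_i$ on each block $A_i$ and Lagrange interpolation from any $r$ of the $r+1$ points, and the distance lower bound $d \geq n - \deg f_a \geq n-k-\lceil k/r\rceil+2$ matched against the Singleton-type upper bound, are all sound and are exactly the steps of the cited construction. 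Your closing remark that only disjointness and cardinality of the $A_i$ are used is also accurate and worth keeping in mind.
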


The following is the key observation.
\begin{remark}\label{rem:fundamental_remark}
Let $L_f=\vF_q(x)$, where $x$ is any root of $f-t$ in $M_f$.
It is easy to see that each of the $A_i$'s corresponds to a totally split place of degree $1$ of the extension $L_f:\vF_q(t)$: if $f(A_i)=\{t_i\}$ for some $t_i\in \vF_q$, then the place corresponding to $t_i$ factors as a product of exactly $r+1$ places of $L_f$, which themselves correspond to the elements of $A_i$. 

Clearly, the correspondence between the $A_i$'s and the totally split places of $\vF_q(t)$ is injective and simply given by $A_i\mapsto f(A_i)=\{t_i\}$.
 In addition, the maximum $\ell$ for which $f$ is $(r,\ell)$-good is the number of totally split places in $\vF_q(t)$ of the extension $\vF_q(x):\vF_q(t)$.
As Theorem \ref{thm:LRCconstruction} shows, a large $\ell$ is desirable as for fixed locality $r$ it allows constructions of optimal codes with parameters $((r+1)\ell, rt, r)$.
\end{remark}

\begin{remark}\label{rem:r1poly}
Notice that it is obvious to construct an $(r,1)$-good polynomial as it is enough to take a totally split polynomial. That allows to construct only one LRC  with parameters $(r+1,r,r)$. This is the reason why in this paper we include both $r$ and $\ell$ in the notion of ``good'' polynomial.
\end{remark}

\section{Galois Theory over Global Function Fields and Good Polynomials}\label{sec:GalThFF}

We now briefly explain the essence of the proposed method. We start with a polynomial $f$ and we are interested in the number of $t_0$'s in $\vF_q$ such that $f-t_0$ splits completely, by Remark \ref{rem:fundamental_remark}. Let $t$ be trascendental over $\vF_q$ and let us consider the extension $\vF_q(x):\vF_q(t)$ where $x$ is a root of $f-t$ over the algebraic closure of $\vF_q(t)$. The splitting of the places of degree $1$ of the extension $\vF_q(x):\vF_q(t)$ corresponds to the factorization shapes of $f-t_0$, when $t_0$ varies in $\vF_q$. Unfortunately, for most polynomials $f$, the extension $\vF_q(x):\vF_q(t)$ is not Galois, but if we take the Galois closure $M_f$ of such extension, we can still extract information about the splitting of the places in $\vF_q(x):\vF_q(t)$ by looking at the disjoint cycle decomposition of the elements of the Galois group (this is a classical fact, but see for example  \cite[Lemma 1]{ferraguti2018complete} or \cite[Theorem 2.3]{micheli2019selection}): as we are interested in the totally split places, we consider the only element with all fixed points, the identity.
As long as the field of constants $k_f$ of $M_f$ is simply $\vF_q$, Chebotarev Density Theorem for function fields applied on the identity (see for example \cite{kosters2017short}) ensures that the number of totally split places is accurately estimated by \cite[Corollary 1]{kosters2017short}, up to an error term of $O(\sqrt q)$ where the implied constant is independent of $q$.

The following proposition summarises all the results we need from algebraic number theory in a compact way. We include the proof for completeness.
\begin{proposition}\label{thm:estimate_totsplit}
Let $f\in \vF_q[X]$ be a separable polynomial.
Let $g_f$ be the genus of $M_f$ and
let $x$ be a root of $f(X)-t$ in $M_f$.
\begin{itemize}
\item[(i)] If the extension field $\vF_q(x):\vF_q(t)$ has a totally split place $P$ of degree $1$, then $k_f=\vF_q$.
\item[(ii)] Suppose that $k_f=\vF_q$. Let $T^1_\text{split}(f)$ be the set of $t_0\in \vF_q$ such that $f-t_0$ factors in $\deg(f)$ distinct linear factors over $\vF_q$. Then
\[\frac{q+1-2g_f\sqrt{q}}{\# G_f}-\frac{\#\Ram^1(M_f:\vF_q(t))}{2}\leq \#T^1_\text{split}(f) \leq \frac{(q+1)+2g_f\sqrt{q}}{\# G_f},
\]
where $\Ram^1(M_f:\vF_q(t))$ is the set of ramified places of degree $1$ of the extension $M_f:\vF_q(t)$.
\item[(iii)] Let $C(f)$ be the smallest integer such that 
\[\frac{C(f)+1-2g_f\sqrt{C(f)}}{\# G_f}-\frac{\#\Ram^1(M_f:\vF_q(t))}{2}>0.
\]
 If $q>C(f)$, and $k_f=\vF_q$, then $\vF_q(x):\vF_q(t)$ has a totally split place.
\end{itemize}
\end{proposition}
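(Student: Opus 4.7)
The plan is to prove (i) with a short residue-field argument, (ii) by combining the Hasse--Weil bound for $M_f$ over its constant field with a careful count of how degree-$1$ places of $M_f$ arise from degree-$1$ places of $\vF_q(t)$, and (iii) as a one-line consequence of (ii).

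For (i), I invoke the classical criterion (see \cite[Cor.~3.9.3]{stichtenoth2009algebraic}) that a place of $\vF_q(t)$ is totally split in $\vF_q(x):\vF_q(t)$ if and only if it is totally split in its Galois closure $M_f:\vF_q(t)$; this is already alluded to in the introduction. Applied to the given $P$ of degree $1$, it produces a place $R$ of $M_f$ above $P$ with $e(R|P)=f(R|P)=1$, so that $\cO_R/R\cong\cO_P/P\cong\vF_q$. Since every element of $k_f$ is algebraic over $\vF_q$ and therefore lies inside the residue field $\cO_R/R$, we obtain $k_f\subseteq\vF_q$. The reverse inclusion is tautological, so $k_f=\vF_q$.

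For (ii), the idea is to count $\#\mathcal P^1(M_f)$ in two ways. Any $R\in\mathcal P^1(M_f)$ must lie above a place $P$ of $\vF_q(t)$ with $f(R|P)\deg(P)=\deg(R)=1$, forcing $\deg(P)=1$ and $f(R|P)=1$. Such a $P$ is either unramified with trivial Frobenius (equivalently totally split in $M_f$, equivalently in $T^1_{\text{split}}(f)$ by (i) combined with the totally-split criterion) and then contributes exactly $\#G_f$ degree-$1$ places above, or ramified of degree $1$ and then contributes $\#G_f/e_P\le\#G_f/2$ such $R$'s since $e_P\ge 2$. Summing these cases yields
\[
\#G_f\cdot\#T^1_{\text{split}}(f)\;\le\;\#\mathcal P^1(M_f)\;\le\;\#G_f\cdot\#T^1_{\text{split}}(f)+\tfrac{\#G_f}{2}\,\#\Ram^1(M_f:\vF_q(t)).
\]
The Hasse--Weil bound for $M_f$ over its constant field, applicable since by hypothesis $k_f=\vF_q$, gives $|\#\mathcal P^1(M_f)-(q+1)|\le 2g_f\sqrt q$. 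Substituting and dividing by $\#G_f$ yields both inequalities of the statement.

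Part (iii) is then immediate: by construction $C(f)$ is the smallest integer that makes the lower bound of (ii) strictly positive, so $q>C(f)$ together with $k_f=\vF_q$ forces $\#T^1_{\text{split}}(f)\ge 1$, hence the existence of a totally split place. The subtlest step in the argument is the bookkeeping in (ii): observing that degree-$1$ places of $M_f$ can arise \emph{only} above degree-$1$ places of $\vF_q(t)$, and using the uniform bound $e_P\ge 2$ at ramified places, is what isolates the clean error term $\#\Ram^1/2$.
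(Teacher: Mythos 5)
Your argument is correct, and parts (i) and (iii) coincide with the paper's proof. For part (ii) you take a genuinely different and more self-contained route: the paper invokes \cite[Corollary 1]{kosters2017short} (an explicit Chebotarev density theorem for function fields) applied to the conjugacy class of the identity, whereas you bypass Chebotarev entirely by counting $\#\mathcal P^1(M_f)$ directly and comparing against the Hasse--Weil bound. Both are valid, and in fact Kosters' Chebotarev estimate is itself proved by a count of degree-one places, so your proof is essentially ``unfolding'' the black box. The advantage of your version is its transparency for the specific case at hand (trivial Frobenius), since one avoids the general Frobenius-class bookkeeping; the advantage of the paper's version is that the same reference covers all conjugacy classes uniformly, which may be useful elsewhere.

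Two small points you should tighten. First, when you assert that the unramified degree-one places with trivial Frobenius are \emph{exactly} the elements of $T^1_{\text{split}}(f)$, you are implicitly discarding the place at infinity. This is fine because $\infty$ is totally ramified in $\vF_q(x):\vF_q(t)$ (it is the unique pole of $x$, with ramification index $\deg f$), hence ramified in $M_f:\vF_q(t)$; the paper makes this explicit (``observing that the place at infinity of $\vF_q(t)$ is ramified''), and without it the identification of the unramified totally split set with $T^1_{\text{split}}(f)$ is not literally true. Second, your statement that a ramified degree-one place of $\vF_q(t)$ ``contributes $\#G_f/e_P$'' degree-one places of $M_f$ should read ``contributes at most $\#G_f/e_P$'': if the common relative degree $f_P$ above $P$ exceeds $1$, the contribution is zero. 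The inequality $\#G_f/e_P\le\#G_f/2$ that you actually use is of course still valid, so this does not affect the bound, but the intermediate claim as written is slightly off.
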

\begin{proof}
Let us prove (i).  Since $M_f$ is the Galois closure of $\vF_q(x):\vF_q(t)$, by \cite[Lemma 3.9.5.]{stichtenoth2009algebraic} $P$ is totally split also in $M_f:\vF_q(t)$.
Let $R\subseteq M_f$ be a place lying above the totally split place $P\in \mathcal P^1(\vF_q(t))$.  Since $P$ is totally split and of degree $1$ we have that $[\cO_R/R:\cO_P/P]=1$ and then $\cO_R/R\cong \vF_q$. By \cite[Proposition 1.1.5, (c)]{stichtenoth2009algebraic} we have that $\cO_R/R$ contains the field of constants of $M_f$, and in turn $k_f$ cannot be a proper extension of $\vF_q$.

Let us now prove (ii). Let $P$ be a place of $\vF_q(t)$. Since $M_f$ is a Galois extension of $\vF_q(t)$ all places of $M_f$ lying above $P$ have the same relative degree $f(P)$ and ramification index $e(P)$ by \cite[Corollary 3.7.2]{stichtenoth2009algebraic}. 
\vspace{0.5cm}

\textbf{Claim 1.} \emph{A place $P$ in $\vF_q(t)$ is totally split in $M_f:\vF_q(t)$ if and only if it is unramified and the identity is a Frobenius for $P$.}
\begin{proof}[Proof of claim 1]
A place $P$ is totally split if and only if any place $R\subseteq M_f$ lying over $P$ is unramified and $f(P)=1$, which happens if and only if $P$ is unramified and the identity of $G_f$ acts as a Frobenius for the (trivial) field extension $\cO_R/R:\cO_P/P$.
\end{proof}

\textbf{Claim 2.} \emph{Let $t_0\in \vF_q$. The polynomial $f(X)-t_0\in  \vF_q[X]$ splits into $\deg(f)$ distinct linear factors if and only if the place $P_0$ corresponding to $t_0$ in $\vF_q(t)$ is totally split in $M_f:\vF_q(t)$.}
\begin{proof}[Proof of claim 2]
Recall that $x\in M_f$ is a root of $f(X)-t$. First observe that the splitting of degree $1$ places in the extension $\vF_q(x):\vF_q(t)$ correspond exactly to the factorization of $f(x)-t_0$, for $t_0\in \vF_q$. Since $M_f$ is the Galois closure of the extension $\vF_q(x):\vF_q(t)$, then \cite[Lemma 3.9.5.]{stichtenoth2009algebraic} ensures that $P_0$ is also totally split in $M_f:\vF_q(t)$. Viceversa, since ramification and relative degrees are multiplicative in intermediate extensions, it is easy to see that if $P_0$ is totally split in the extension $M_f:\vF_q(t)$, then it is also totally split in $\vF_q(x):\vF_q(t)$.
\end{proof}

Using the claims above and observing that the place at infinity of $\vF_q(t)$ is ramified, we reduced the problem to finding all places $P\subset \vF_q(t)$ that are totally split in $M_f:\vF_q(t)$.

We want to use \cite[Corollary 1]{kosters2017short}. Since in our case $k_f=\vF_q$  the field of constants extension is trivial. In the notation of Koster's corollary we have that $M=M_f$, $N=G_f$, and $\Id\in \overline F$. Moreover, we are interested in $\gamma=\Id$, therefore $(P,M_f)(\Id)= \delta_P/e(P)$, where $\delta_P=1$ if the identity is a Froebenius at $P$ (i.e. $P$ splits into factors of relative degree $1$ and multiplicity $e(P)$), and $0$ otherwise. Then we have

\[
\left| \sum_{P\in \mathcal P^1(\vF_q(t)/\vF_q)} \frac{\delta_P}{e(P)}- \frac{1}{\# G_f} (q+1)\right| \leq \frac{2}{\# G_f}g_f \sqrt q,
\]

We now want to split the sum for ramified and unramified places so we set 
\[\mathcal R=\sum_{\substack{P\in \Ram^1(M_f:\vF_q(t))\\ \text{ $\Id$ is a Frobenius at $P$} }} \frac{1}{e(P)}, \quad \text{and} \quad \mathcal U=\left(\sum_{\substack{\text{unramified } P\in \mathcal P^1(\vF_q(t)/\vF_q): \\ \text{ $\Id$ is a Frobenius at $P$}}} 1\right)\]
so that we get

\[\left| \mathcal{R} + 
\mathcal U- \frac{1}{\#  G_f} (q+1)\right|\leq \frac{2}{\# G_f}g_f \sqrt q.\]

Observing now that  $e(P)\geq 2$, the place at infinity is ramified, and
\[\# T^1_\text{split}(f)=\mathcal U,\]
the final claim follows directly.

The statement in (iii) is immediate by observing that the condition on $q$ ensures the existence of a totally split place by point (ii).
\end{proof}

Point (ii) of the proposition above essentially states a very classical fact from algebraic number theory: since the number of ramified places is bounded by an absolute constant (depending only on the degree of $f$) the set of  totally split places of a Galois extension of global fields has density $1/\# G$ where $G$ is the Galois group of the extension field. When the global fields are actually global function fields (i.e. finite extensions of $\vF_q(t)$) the number of degree one totally split places can be estimated as in \cite[Corollary 1]{kosters2017short}, leading to the estimate in (ii). The estimate is essentially optimal, as the Riemann Hypothesis for curves over finite fields is proved and is equivalent to the Hasse-Weil bound.

We notice now how $\# \Ram^1(M_f:\vF_q(t))$ and $g_f$ can be explicitely bounded by a constant depending on the degree of $f$ and independent of $q$.

\begin{remark}\label{rem:genus_zero}
Proposition \ref{thm:estimate_totsplit}, point (ii) is in the format of an estimate, but whenever $g_f$ is zero, it can be used to obtain an exact formula, if in the proof one works out exactly what happens at the ramified places for the quatities $\delta_P/e(P)$.
\end{remark}

\begin{proposition}\label{thm:estimates_g_and_R}
Let $f\in \vF_q[X]$ be a separable polynomial, $n=\deg(f)$, and $g_f$ be the genus of the splitting field $M_f$ of $f-t$ over $\vF_q(t)$. Let $\Ram(M_f:\vF_q(t))$ (resp. $\Ram^1(M_f:\vF_q(t))$) be the set of ramified places of $\vF_q(t)$ (resp. ramified places of degree $1$) in the field extension $M_f:\vF_q(t)$. Then
\begin{enumerate}
\item[(i)] $\# \Ram^1(M_f:\vF_q(t))\leq \# \Ram(M_f:\vF_q(t))\leq n$.
\item[(ii)] Suppose that $\Char(\vF_q)=
p\nmid \# G_f$ 
and $k_f=\vF_q$. Then $g_f\leq ((n-2)\# G_f+2)/2$.
\end{enumerate}
\end{proposition}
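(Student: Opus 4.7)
The plan is to dispatch (i) and (ii) by the standard ``critical values plus Riemann–Hurwitz'' template, reducing everything to the branch locus of $f$ over $\vF_q(t)$.

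For (i), the first inequality is tautological. For the second, I would use two standard facts. First, since $M_f$ is the Galois closure of $\vF_q(x):\vF_q(t)$, it is the compositum of the fields $\vF_q(x_i)$ where the $x_i$ are the roots of $f-t$ in $M_f$; each of these is $\vF_q(t)$-isomorphic to $\vF_q(x)$ and therefore has the same ramification locus in $\vF_q(t)$, so a place $P$ of $\vF_q(t)$ is ramified in $M_f$ iff it is ramified in $\vF_q(x)$. Second, a finite place $P$ of $\vF_q(t)$ is ramified in $\vF_q(x):\vF_q(t)$ iff the values $t_0\in\overline{\vF_q}$ in the Galois orbit corresponding to $P$ are critical values of $f$, i.e.\ of the form $f(\alpha)$ with $f'(\alpha)=0$. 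The number of such critical values in $\overline{\vF_q}$ is bounded by the number of roots of $f'$, hence by $n-1$; a fortiori the number of Galois orbits of critical values (and also the sum of their degrees) is $\le n-1$. Adding the single place at infinity yields $\#\Ram(M_f:\vF_q(t))\le n$.

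For (ii), I would invoke the Riemann–Hurwitz formula for $M_f:\vF_q(t)$. The hypothesis $p\nmid \#G_f$ forces $p\nmid e(R)$ for every place $R$ of $M_f$ (since $e(R)\mid [M_f:\vF_q(t)]=\#G_f$), so every ramification is tame and each different exponent equals $e(R)-1$. Because $M_f:\vF_q(t)$ is Galois, for each ramified place $P$ of $\vF_q(t)$ the places $R$ above $P$ all share a common ramification index $e_P$ and relative degree $f_P$, and there are $\#G_f/(e_Pf_P)$ of them, each of degree $f_P\deg P$. Summing contributions gives
\[
\deg\mathrm{Diff}(M_f/\vF_q(t)) \;=\; \#G_f\sum_{P\text{ ram}}\Bigl(1-\tfrac{1}{e_P}\Bigr)\deg P \;\le\; \#G_f\sum_{P\text{ ram}}\deg P.
\]
The degree-tracking refinement of the argument in (i) (finite critical values contribute $\le n-1$ to $\sum\deg P$, the place at infinity contributes $1$) gives $\sum_{P\text{ ram}}\deg P\le n$, so $\deg\mathrm{Diff}\le n\,\#G_f$. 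Using the fact that $k_f=\vF_q$ (so $M_f$ is genuinely a function field over $\vF_q$ and Riemann–Hurwitz applies with bottom genus $0$), the formula $2g_f-2=-2\,\#G_f+\deg\mathrm{Diff}$ rearranges immediately to the claimed bound $g_f\le ((n-2)\,\#G_f+2)/2$.

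The proof is essentially routine; the only point requiring a little care is the bookkeeping in (ii), namely simultaneously tracking $e_P$, $f_P$ and $\deg P$ under the Galois hypothesis and confirming tameness from $p\nmid \#G_f$, which is where the argument would go wrong if the hypothesis were dropped.
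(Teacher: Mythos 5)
Your proof is correct and follows the same route the paper indicates (and leaves to the reader): for (i), ramification of $\vF_q(x):\vF_q(t)$ and of its Galois closure $M_f:\vF_q(t)$ occurs over the same places of $\vF_q(t)$, which are the critical values of $f$ together with $\infty$, giving the bound $n$; for (ii), the hypothesis $p\nmid\#G_f$ guarantees tame ramification, so Riemann--Hurwitz together with the bound $\sum_{P\text{ ram}}\deg P\leq n$ yields the stated genus bound. The bookkeeping in (ii) (constant $e_P$, $f_P$ over a given $P$ in a Galois extension, tameness from $e(R)\mid\#G_f$, and using $k_f=\vF_q$ so that the Hurwitz formula applies with common constant field) is exactly what the paper's appeal to ``Hurwitz genus formula for Galois extensions'' compresses.
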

The proof of the above statement is immediate by observing that (i) follows from the fact that a place $P$ of $\vF_q(t)$ is ramified in the extension $\vF_q(x):\vF_q(t)$ if and only if it is ramified in its Galois closure $M_f:\vF_q(t)$, while (ii) is just an application of Hurwitz genus formula for Galois extensions.

\subsection{Existence of good polynomials}\label{sec:existencegood}
In this section we prove some existential results over base fields which are relatively large compared with the locality parameter $r$.
\begin{proposition}\label{thm:generic_case}
Let $r$ be a positive integer and $q$ be a prime power. Then any separable polynomial of degree $r+1$ over $\vF_q$ such that $k_f=\vF_q$ is at least $(r,\ell)$-good, with $\ell$ at least
\[\frac{(q+1)-2g_f\sqrt{q}}{(r+1)!}-\deg(f)/2.\]
Moreover, if $k_f\neq \vF_q$ then $f$ is not a good polynomial.
\end{proposition}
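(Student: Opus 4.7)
The plan is to combine Proposition \ref{thm:estimate_totsplit} with universal bounds on the monodromy group and the ramification locus. By Remark \ref{rem:fundamental_remark}, if $\ell_{\max}$ denotes the largest $\ell$ for which $f$ is $(r,\ell)$-good, then $\ell_{\max}=\#T^1_{\text{split}}(f)$, so the first assertion reduces to a lower bound on this quantity.

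Assuming $k_f=\vF_q$, Proposition \ref{thm:estimate_totsplit}(ii) yields
\[\# T^1_{\text{split}}(f)\;\geq\; \frac{q+1-2g_f\sqrt{q}}{\# G_f}\;-\;\frac{\#\Ram^1(M_f:\vF_q(t))}{2}.\]
Since $G_f$ acts faithfully on the $r+1$ roots of $f-t$ in $M_f$, the natural embedding $G_f\hookrightarrow \mathcal S_{r+1}$ gives $\# G_f\leq (r+1)!$. Moreover, Proposition \ref{thm:estimates_g_and_R}(i) yields $\#\Ram^1(M_f:\vF_q(t))\leq \deg(f)=r+1$. When $q+1-2g_f\sqrt{q}\geq 0$, replacing $\# G_f$ by the larger value $(r+1)!$ in the non-negative numerator decreases the main term, and replacing $\#\Ram^1(M_f:\vF_q(t))$ by $\deg(f)$ makes the subtracted term larger; substituting both yields exactly the asserted bound. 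When $q+1-2g_f\sqrt{q}<0$, the claimed lower bound is itself negative, whereas $\# T^1_{\text{split}}(f)\geq 0$, so the inequality is trivial.

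The second assertion is essentially the contrapositive of Proposition \ref{thm:estimate_totsplit}(i). Indeed, if $f$ is good, then by definition there exists at least one set $A_1\subseteq \vF_q$ with $\#A_1=r+1$ on which $f$ is constant; by Remark \ref{rem:fundamental_remark} this produces a totally split place of degree $1$ in the extension $\vF_q(x):\vF_q(t)$, and Proposition \ref{thm:estimate_totsplit}(i) then forces $k_f=\vF_q$. Contraposing gives the stated implication.

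There is no substantive obstacle, since all of the hard analytic work already sits in Propositions \ref{thm:estimate_totsplit} and \ref{thm:estimates_g_and_R}. The only point requiring care is the minor case split on the sign of $q+1-2g_f\sqrt{q}$: enlarging a denominator does not preserve an inequality when the numerator can be negative, so one has to handle that regime separately; both sub-cases, however, are immediate.
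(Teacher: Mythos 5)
Your proof is correct and follows essentially the same route as the paper: apply Proposition \ref{thm:estimate_totsplit}(ii), bound $\# G_f\leq (r+1)!$ via the embedding into $\mathcal S_{r+1}$, bound the ramified places by $\deg(f)$ via Proposition \ref{thm:estimates_g_and_R}(i), and derive the second assertion from Proposition \ref{thm:estimate_totsplit}(i). Your explicit handling of the sign of $q+1-2g_f\sqrt{q}$ is a small point the paper leaves implicit (and which is also covered by the paper's convention that non-positive $\ell$ makes $(r,\ell)$-goodness vacuous), but it does not change the argument.
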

\begin{proof}
First observe that $G_f$ can be seen as a subgroup of the symmetric group $\mathcal S_{r+1}$, which forces $\# G_f\leq (r+1)!$. 
The statement follows immediately by applying point (ii) of Proposition \ref{thm:estimate_totsplit} and bounding $\# \Ram^1(M_f:\vF_q(t))$ with point (i) of Proposition \ref{thm:estimates_g_and_R}.
If $k_f\neq \vF_q$, then the statement (i) of Proposition \ref{thm:estimate_totsplit} applies, so $f$ cannot have a totally split place and therefore it cannot be $(r+1,\ell)$-good for any positive integer $\ell$.

\end{proof}

\begin{remark}
The condition $k_f=\vF_q$ is generic. In fact, the conditions given for example in \cite[Theorem 3.6]{turnwald} (for $K=\overline\vF_q$) for $G_f$ to be symmetric over $\overline{\vF}_q(t)$ are all open. This forces the geometric Galois group to be symmetric most of the times, which force the arithmetic Galois group to be symmetric as well, and therefore the field of constants extension to be trivial. Actually we will see in practical examples that it is very easy to force $k_f=\vF_q$, by simply using point (i) of Proposition \ref{thm:estimate_totsplit}. 
\end{remark}

\begin{proposition}\label{thm:pessimistic}
Let $B\subseteq \vF_q$ be a set of size $r+1$.
Let $r$ be a positive integer such that $\gcd(q,(r+1)!)=1$ and $f=\prod_{b\in B}{(X-b)}$. Then $f$ is  $(r,\ell)$-good, with $\ell$ at least
\[\frac{q+1}{(r+1)!}-\left(r-1+\frac{2}{(r+1)!}\right)\sqrt{q}-\frac{r+1}{2}.\]
\end{proposition}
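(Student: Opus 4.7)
The plan is to verify the hypotheses of Proposition \ref{thm:estimate_totsplit}(ii), specialize its lower bound to our $f$, and then combine with Proposition \ref{thm:estimates_g_and_R} to eliminate $g_f$ and $\#\Ram^1(M_f:\vF_q(t))$ from the expression.

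First, I would establish that $k_f=\vF_q$ so that Proposition \ref{thm:estimate_totsplit}(ii) applies. Since $f=\prod_{b\in B}(X-b)$ with $B\subseteq \vF_q$ of size $r+1$, the polynomial $f(X)-0=f(X)$ splits completely into $r+1$ distinct linear factors over $\vF_q$. By Remark \ref{rem:fundamental_remark}, the place of $\vF_q(t)$ corresponding to $t_0=0$ is therefore totally split in $\vF_q(x):\vF_q(t)$, and part (i) of Proposition \ref{thm:estimate_totsplit} forces $k_f=\vF_q$. Next, since $G_f$ embeds into $\mathcal{S}_{r+1}$ by its action on the roots of $f-t$, we have $\#G_f\leq (r+1)!$; together with $\gcd(q,(r+1)!)=1$ this gives $p\nmid \#G_f$, so Proposition \ref{thm:estimates_g_and_R}(ii) applies and yields $2g_f\leq (r-1)\#G_f+2$. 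Moreover, Proposition \ref{thm:estimates_g_and_R}(i) gives $\#\Ram^1(M_f:\vF_q(t))\leq r+1$.

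Now I would plug these three bounds into the lower estimate of Proposition \ref{thm:estimate_totsplit}(ii). Substituting $2g_f\leq (r-1)\#G_f+2$ into the numerator gives
\[
\ell \;\geq\; \frac{q+1-((r-1)\#G_f+2)\sqrt{q}}{\#G_f} - \frac{\#\Ram^1(M_f:\vF_q(t))}{2}
\;=\; \frac{q+1-2\sqrt{q}}{\#G_f} - (r-1)\sqrt{q} - \frac{\#\Ram^1(M_f:\vF_q(t))}{2}.
\]
Since $q+1-2\sqrt{q}=(\sqrt{q}-1)^2\geq 0$, the fraction $(q+1-2\sqrt{q})/\#G_f$ is minimized when $\#G_f$ is as large as possible, so using $\#G_f\leq (r+1)!$ and $\#\Ram^1\leq r+1$ yields
\[
\ell \;\geq\; \frac{q+1}{(r+1)!} - \frac{2\sqrt{q}}{(r+1)!} - (r-1)\sqrt{q} - \frac{r+1}{2},
\]
which rearranges into the claimed bound.

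There is no real obstacle here: once the existence of the split place at $t_0=0$ is noted, the argument is a mechanical combination of the two propositions. The one place a reader could slip is in choosing the correct monotonicity direction for $\#G_f$; the key observation is that, after substituting the genus bound, both the main term and the genus contribution collapse into the single quantity $(q+1-2\sqrt{q})/\#G_f$, which is non-negative and hence worst-case at $\#G_f=(r+1)!$.
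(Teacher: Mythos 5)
Your proof is correct and follows essentially the same route as the paper's: establish $k_f=\vF_q$ from the split place at $t_0=0$, apply Proposition \ref{thm:estimate_totsplit}(ii), and substitute the bounds on $g_f$ and $\#\Ram^1$ from Proposition \ref{thm:estimates_g_and_R} (with $p\nmid\#G_f$ following from $\#G_f\mid(r+1)!$ and the coprimality hypothesis). Your reordering — collapsing the genus bound into $(q+1-2\sqrt{q})/\#G_f$ before replacing $\#G_f$ by $(r+1)!$ — is in fact a slightly cleaner way to handle the monotonicity, since the paper's route through Proposition \ref{thm:generic_case} implicitly relies on the convention that a negative lower bound on $\ell$ is vacuous.
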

\begin{proof}
Clearly $f$ is separable because the characteristic does not divide degree of the polynomial.
First using (i) of Proposition \ref{thm:estimate_totsplit} we get that $k_f=\vF_q$, as $t=0$ is a totally split place. Now using Proposition \ref{thm:generic_case} we get that $f$ is at least $(r,\frac{(q+1)-2g_f\sqrt{q}}{(r+1)!}-\deg(f)/2)$-good. Using now the bound on $g_f$ given by (ii) of Proposition \ref{thm:estimates_g_and_R} we get the wanted result.
\end{proof}

\begin{remark}
The proposition above implies that forcing just one totally split place immediately gives the existence of many totally split places, which is an interesting fact.
It is worth noticing here that the worst case scenario given in Proposition \ref{thm:pessimistic} is actually the generic case: if one fixes a random polynomial of degree $n$ in $\vF_q[X]$ and considers $G_f$, then most likely $G_f=\mathcal S_n$, in which case the estimate for the number of totally split places is given by $q/(n!)$. Table \ref{table_of_x5_smt} shows some instances of this fact for degree $5$.
\end{remark}

\begin{remark}
 The reader should observe that Proposition \ref{thm:pessimistic} and \cite[Proposition 3.4]{tamo2014family} have the same asymptotic in $q$ which is simply $q/(r+1)!$. The most important difference is that Proposition \ref{thm:pessimistic} is constructive as it ensures that \emph{however} we fix a polynomial of the form $f=\prod_{b\in B}{(x-b)}$, this has splitting covering of size roughly $q/(r+1)!$.
\end{remark}

\subsection{Selection of very good polynomials}\label{sec:selection}

The method provides existential results and fits the existing literature on good polynomials in a Galois theoretical context, but also allows to produce new good polynomials, which is the most important application. 
We give emphasize here  that we want ``very good'' polynomials, i.e. $(r,\ell)$-good polynomials such that $\ell$ is as large as possible (we already noticed in Remark \ref{rem:r1poly} that building an $r$-good polynomial is a trivial task if $\ell$ is not taken into account, but then length and dimension of the code are small). 
Let us start with the two fundamental and well known constructions from \cite{liu2018new,tamo2014family}.

\begin{proposition}\label{thm:previous_constructions}
The following are $r$-good polynomials
\begin{itemize}
\item Let $r+1=m$ be a divisor of $q-1$. The polynomial $X^{m}$ is $(r,(q-1)/m)$-good.
\item Let $V$ be an additive subgroup of $\vF_q$. The polynomial  
$\prod_{v\in V} (X-v)$ is $(\#V-1, q/\#V)$-good. 
\end{itemize}
\end{proposition}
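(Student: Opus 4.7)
The plan is to verify both items by exhibiting explicit splitting coverings that arise from a natural group-theoretic partition of $\vF_q$ (or $\vF_q^*$); no appeal to the Galois-theoretic machinery of Proposition \ref{thm:estimate_totsplit} is needed here, since in each case we can write down the covering directly.

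For the first item, since $m \mid q-1$, the $m$-th power map $\phi \colon \vF_q^* \to \vF_q^*$, $x \mapsto x^m$, is a group homomorphism whose kernel is the unique cyclic subgroup of $\vF_q^*$ of order $m$; hence every nonempty fiber of $\phi$ has exactly $m = r+1$ elements. Taking the fibers above each element of the image of $\phi$ produces $\#\phi(\vF_q^*) = (q-1)/m$ pairwise disjoint subsets of $\vF_q$, on each of which $X^m$ is constant, which is the required splitting covering.

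For the second item, set $f(X) = \prod_{v \in V}(X - v)$, a polynomial of degree $\#V$, so that $r = \#V - 1$. For any $a \in \vF_q$ and any $u \in V$, the map $V \to V$, $v \mapsto v - u$, is a bijection since $V$ is an additive subgroup of $\vF_q$, and so
\[
f(a + u) = \prod_{v \in V}(a + u - v) = \prod_{v' \in V}(a - v') = f(a).
\]
Hence $f$ is constant on each coset $a + V$, and the cosets of $V$ partition $\vF_q$ into $q/\#V$ pairwise disjoint subsets of size $\#V = r+1$, giving the required splitting covering.

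No real obstacle arises: both arguments rely only on the observation that the relevant polynomial evaluations are invariant under a transitive action of a group of order $r+1$ on each set in the covering. The only item to double-check is that the fibers (resp. cosets) are genuinely distinct, disjoint, and of the correct size $r+1$, all of which is immediate from the group structure. The disjointness of the sets automatically forces the constant values $t_i$ to be distinct, since the equation $f(X)=t_i$ has at most $\deg(f)=r+1$ solutions in $\vF_q$.
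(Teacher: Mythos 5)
The paper does not actually supply a proof of this proposition: it is stated as known (from Tamo--Barg and Liu et al.), and the remark that follows it reworks only the $X^m$ case \emph{through} the Galois/Chebotarev framework (computing $k_f=\vF_q$, $g_f=0$, $\#G_f=m$, $\Ram^1=\{0,\infty\}$ and then reading off $\ell=(q-1)/m$ from Proposition~\ref{thm:estimate_totsplit}), explicitly remarking that ``the direct proof is actually easier than the one proposed here.'' Your proof is exactly that direct proof, and it is correct: the fibers of $x\mapsto x^m$ on $\vF_q^*$ are the cosets of the order-$m$ kernel, giving $(q-1)/m$ disjoint sets of size $m$ on which $X^m$ is constant; and $\prod_{v\in V}(X-v)$ is invariant under translation by $V$, so it is constant on each of the $q/\#V$ additive cosets of $V$, which partition $\vF_q$. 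Both arguments verify all three bullet points of the definition of a splitting covering. What you lose relative to the paper's illustration is that your argument is ad hoc to the cyclic/additive group structure, whereas the paper's reworking of the $X^m$ example is deliberately more laborious precisely to demonstrate that the general density method recovers it as a special case; what you gain is a short, self-contained and fully elementary verification, and you actually cover the additive-subgroup case as well, which the paper's remark does not spell out. Your closing observation that the $t_i$ are automatically distinct because $\deg f = r+1$ is correct, though not required by the definition of $(r,\ell)$-good, which only asks that the $A_i$ be disjoint.
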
 
If one combines the construction above via composition, one can get new good polynomials as described in \cite[Theorem 3.3]{tamo2014family} and \cite[Section A,B]{liu2018new}.

\begin{remark}
All the good polynomials above have in common that $k_f=\vF_q$ and $g_f=0$, so thanks to Remarks \ref{rem:genus_zero} and \ref{rem:fundamental_remark}, they fit completely in the easy case (the genus zero condition) of our context. In particular, $k_f=\vF_q$ (which is a necessary condition for a polynomial to be good) exactly when $m\mid q-1$ or the linearised polynomial $\prod_{v\in V} (X-v)$  splits completely over $\vF_q$. 

Also, the splitting field of their composition inherits nice properties so the results in \cite[Theorem 3.3]{tamo2014family} and \cite[Section A,B]{liu2018new} can also be derived from our framework.

For the sake of explanation of our method, let us fit for example the case of $f=X^m\in \vF_q[X]$ in our context. For simplicity, let us assume $m\geq 3$. First, observe that such example of good polynomial exists only when $m\mid q-1$, which is exactly the condition needed to have $k_f=\vF_q$: in fact, if $m\mid q-1$ then $\vF_q$ contains a primitive $m$-th root of unity and therefore we have that $M_f\cong \vF_q(t)[x]/(f(x)-t)$ as it is enough to add just one root of $f-t$ to $\vF_q(t)$ to get all the other roots.
Now, $\vF_q(x)=\vF_q(t)[x]/(f(x)-t)=M_f$ as $t$ is just another name for $f(x)$ so the function field $\vF_q(x)$ is still rational, so $g_f$ is zero. Trivially 
$\# G_f=[M_f:\vF_q(t)]=m$. In addition
$\Ram^1(M_f: \vF_q(t))=\{\infty, 0\}$ so we get 
\[\frac{q+1}{m}-1\leq \#T^1_\text{split}(f)\leq \frac{q+1}{m}.\]
Since $\#T^1_\text{split}(f)$ has to be an integer and $m\geq 3$, we obtain directly that $\ell=\#T^1_\text{split}(f)=(q-1)/m$.
In this case the direct proof is actually easier than the one proposed here, on the other hand it only works thanks to the cyclic group structure of $\vF_{q}^*$ while our approach works in general, as it is based on a density argument.
\end{remark}

\textbf{The method.} Constructing good polynomials  of given degree using Proposition \ref{thm:estimate_totsplit} is very simple: the number of $t_0$'s such that $f-t_0$ splits into $\deg(f)$ distinct linear factors can be estimated with $q/\#G_f$ (as long as $k_f=\vF_q$, otherwise it is the empty set) up to an error term depending on  $g_f\sqrt{q}/ \# G_f$ and on the ramified places of degree $1$.
Informally, to construct an $(r,\ell)$-good polynomial the task becomes the following: find a class of polynomals $f$ of degree $r+1$ such that the Galois groups of $f-t$ over $\vF_q(t)$ include in some transitive subgroup $G$ of $\cS_{r+1}$ and such that $q/\ell$ is roughly of the size of $G$. Now sieve the family allowing only polynomials with minimal ramification over the base field. 
You do not need to compute exactly the Galois group, but only the fact that $G_f$ includes in a small group is enough. There is a remarkable class of examples arising from arithmetic dynamics whenever $r+1$ is a composite number: in this case one can systematically compose non trivial polynomials over $\vF_q$ to obtain a polynomial $f$ of degree $r+1$. Now $G_f$ will be significantly smaller than $\cS_{r+1}$ as it is contained in the automorphism group of a certain tree (not necessarily binary) of height equal to the number of prime factors of $r+1$.

We give now simple applications of the method in the case $r=2$ with $\#G_f=6$, in the case $r=3$ with $\#G_f=8$, and $r=6$ with $\# G_f=12$.
We want to stress here that these constructions hold for almost all $q$'s.

\begin{theorem}\label{thm:degree_3_pred}
Let $q$ be a prime power, $b\in \vF_q$ and $f=x(x-1)(x-b)\in \vF_q[X]$.
Then
\begin{itemize} 
\item[(i)] if $q$ is even then $f$ is $(2,\ell)$-good with $\ell$ at least
\[\left\lceil{\frac{q+1-2\sqrt{q}}{6}-1}\right\rceil,\]
\item[(ii)] if $q=3^m$, then $f$ is $(2,\ell)$-good with $\ell$ at least 
\[\left\lceil{\frac{q+1-2\sqrt{q}}{6}-1} \right\rceil,\]
Moreover, if $b=-1$ then $f$ is $(2,\ell)$-good with $\ell$ at least 
\[\left\lceil{\frac{q+1-2\sqrt{q}}{6}-\frac{1}{2}}\right\rceil,\]
\item[(iii)] if $q$ is odd such that $q\mod 3 \neq 0$, and $1-b+b^2$ is not a square in $\vF_q$, then $f$ is $(2,\ell)$-good with $\ell$ at least \[\left\lceil{\frac{q+1-2\sqrt{q}}{6}-\frac{1}{2}}\right\rceil.\]

\end{itemize}
\end{theorem}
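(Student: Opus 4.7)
The plan is to apply Proposition~\ref{thm:estimate_totsplit}(ii) to $f$ in each of the three cases, using the uniform a priori bound $\#G_f\leq 6$ that comes from $G_f\hookrightarrow\cS_3$ (since $\deg f=3$). Under the standing assumption $b\notin\{0,1\}$ (the complementary case is trivial because $f$ already has a repeated root) the three elements $0,1,b$ of $\vF_q$ are distinct, $f(0)=f(1)=f(b)=0$, and $t_0=0$ is a totally split place of degree one in $\vF_q(x):\vF_q(t)$. By Proposition~\ref{thm:estimate_totsplit}(i) this forces $k_f=\vF_q$, so Proposition~\ref{thm:estimate_totsplit}(ii) is applicable.

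Next I would identify the degree-one ramified places of $M_f:\vF_q(t)$: by Proposition~\ref{thm:estimates_g_and_R}(i) these coincide with the ramified places in $\vF_q(x):\vF_q(t)$, namely $t=\infty$ and the critical values $f(x_c)$ of $f$. The computation of $f'$ in each characteristic yields the bound on $\#\Ram^1$: in case (i) $f'(x)=x^2+b$ has the double root $\sqrt b\in\vF_q$ and critical value $b(b+1)\in\vF_q$, so $\#\Ram^1\leq 2$; in case (ii) with $b\neq-1$, $f'(x)=(1+b)x+b$ has the root $-b/(1+b)\in\vF_q$, giving $\#\Ram^1\leq 2$, while for $b=-1$ one has $f=x^3-x$ with $f'=-1$ a nonzero constant, no affine critical value, and $\#\Ram^1\leq 1$; in case (iii) $\disc(f')=4(1-b+b^2)$ is a non-square by assumption, so the critical points form a Frobenius-conjugate pair in $\vF_{q^2}\setminus\vF_q$, their critical values merge into a single place of $\vF_q(t)$ of degree two, and $\#\Ram^1\leq 1$.

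The remaining ingredient is a bound $g_f\leq 1$. In case (iii) all ramification is tame (the inertia orders are $2$ and $3$, coprime to $p\geq 5$), and tame Riemann--Hurwitz applied to the degree-$6$ Galois extension $M_f:\vF_q(t)$ gives $\deg(\text{Diff})=6(1-\tfrac13)\cdot 1+6(1-\tfrac12)\cdot 2=10$, hence $g_f=0$. In cases (i) and (ii) there is wild ramification (at the critical value with $e=p=2$ in (i); at $\infty$ with $e=p=3$ in (ii)), and I would treat these by passing to an explicit Artin--Schreier model of the Galois closure. For case (i), the substitutions $x=y+1+b$ followed by $y=\sqrt{1+b+b^2}\,z$ (valid when $1+b+b^2\neq 0$; the exceptional values of $b$ give $\#G_f=3$ and are easier) rewrite $f(x)=t$ as $z^3+z=u$, and the quadratic cofactor for the remaining two roots presents $M_f$ as the Artin--Schreier double cover $w^2+w=1+z^{-2}$ of $\vF_q(z)$; the reduction $W=w+z^{-1}$ yields the equivalent equation $W^2+W=1+z^{-1}$ whose only wild ramification is a single place with reduced pole of order $1$ at $z=0$, contributing different exponent $(1+1)(2-1)=2$, and Riemann--Hurwitz on the degree-two cover of a rational field delivers $g_f=0$. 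An analogous translation centred at the critical point $-b/(1+b)$ handles case (ii), yielding $g_f\leq 1$.

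Plugging $\#G_f\leq 6$, $g_f\leq 1$, and the above case-dependent bounds on $\#\Ram^1$ into Proposition~\ref{thm:estimate_totsplit}(ii) and taking ceilings produces the three stated inequalities. The main obstacle is the wild-ramification genus computation in cases (i) and (ii): one must verify that the chosen Artin--Schreier reduction truly achieves a reduced pole of order $1$ so that the conductor formula $d=(m+1)(p-1)$ applies cleanly and yields $g_f\leq 1$, whereas the tame case (iii) is by contrast an essentially immediate Riemann--Hurwitz plug-in.
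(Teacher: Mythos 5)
Your overall strategy matches the paper's: apply Proposition~\ref{thm:estimate_totsplit}(ii), force $k_f=\vF_q$ via a totally split place at $t_0=0$, bound $\#G_f\leq 6$, and count degree-one ramified places through the zeros of $f'$. That part is fine and your per-characteristic computation of $\#\Ram^1$ agrees with the paper's. The divergence is in how you obtain the genus bound $g_f\leq 1$, and here the paper's route is both simpler and uniformly correct, while your route is sound in two of the three cases but has a genuine gap in the third.

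The paper observes that $M_f=\vF_q(x_1,x_2)$ for any two roots $x_1,x_2$ of $f-t$ (the third root is recovered by field operations, and $t=f(x_1)$), and that $x_1$ has degree $\leq 2$ over $\vF_q(x_2)$ and vice versa; Riemann's inequality (\cite[Cor.~3.11.4]{stichtenoth2009algebraic}) then gives $g_f\leq(2-1)(2-1)=1$ in one stroke, with no case distinction and no care required for wild ramification. You instead try to compute $g_f$ exactly: in case (iii) your tame Riemann--Hurwitz computation on the degree-6 cover is correct and gives $g_f=0$; in case (i) your Artin--Schreier reduction $W^2+W=1+z^{-1}$ with conductor $(1+1)(2-1)=2$ is also correct and gives $g_f=0$. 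But in case (ii) (char $3$, $b\neq -1$) the claim that "an analogous translation centred at the critical point handles case (ii)" does not work as stated: centring at $-b/(1+b)$ kills the linear term, not the quadratic term (you cannot kill the quadratic coefficient in characteristic $3$), so the resulting cubic $y^3-(1+b)y^2=s$ is not of Artin--Schreier type and the wild ramification at $\infty$ is not removed by any affine substitution in $y$. The argument as written therefore does not establish $g_f\leq 1$ in case (ii). One can repair it easily: note that $M_f$ is the degree-two \emph{tame} extension $\vF_q(x,\sqrt{D(x)})$ of $\vF_q(x)$ where $D(x)=(x_2-x_3)^2$, which in characteristic $3$ is the linear polynomial $-(1+b)x+(1-b)^2$; a tame Riemann--Hurwitz on this quadratic extension of a rational field then gives $g_f=0$. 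But the clean way, and what the paper actually does, is simply Riemann's inequality, which sidesteps all wild-ramification considerations and every case distinction at once. I would replace your genus paragraph entirely by that single invocation.
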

\begin{proof}
Applying (i) of Proposition \ref{thm:estimate_totsplit} gives that $k_f=\vF_q$ and therefore we can apply (ii). Clearly $\# G_f$ is at most $6$, as it has to be a subgroup of the symmetric group. $M_f$ is constructed by simply adding two roots $\{x_1,x_2\}$ of $f-t$, since the third one can always be obtained from the other two with field operations. Therefore $M_f=\vF_q(t)(x_1,x_2)=\vF_q(x_1,x_2)$ ($t$ can be obtained by evaluating $f$ at $x_1$ for example), and then $g_f\leq 1$ by Riemann's inequality \cite[Corollary 3.11.4]{stichtenoth2009algebraic} (the minimal polynomial of $x_1$ over $\vF_q(x_2)$ has degree at most $2$ and viceversa).
The ramified places at finite of $M_f$ are in correspondence with the zeroes of the derivative  of $f$. As usual, the place at infinity is always ramified.
\begin{itemize}
\item For even $q=2^m$, $f'=x^2+b=(x+b^{2^{m-1}})^2$ so that $\# \Ram^1(M_f:\vF_q(t))=2$.
\item For $q=3^m$, $f'=-2(b+1)x+b=(b+1)x+b$, which has either no roots ($b+1=0$ and $\# \Ram^1(M_f:\vF_q(t))=1$) or one root ($b+1\neq 0$ and $\# \Ram^1(M_f:\vF_q(t))=2$).
\item For odd $q$, $q\mod 3 \neq 0$ we have $f'=3 x^2-2(b+1)x+b$, which has no roots exactly when $b^2-b+1$ is not a square.
\end{itemize}
Using the formula in (ii) of Proposition \ref{thm:estimate_totsplit} we get the wanted results.

\end{proof}

\begin{remark}
Notice that the construction is not exploiting the multiplicative nor the additive subgroups of $\vF_q$. With this method we are able for example to write a polynomial of fixed degree (see for example the case of Theorem \ref{thm:degree_3_pred}) in $\vZ[X]$ that is $(r,\ell(q))$-good at any $q$, with $\ell(q)$ being an explicit constant.
\end{remark}

Suppose now that we want to construct a $(n,k,3)$-code over an alphabet of size $q$, with $q\equiv 3 \mod 4$. For some $\ell$ we would need an $(3,\ell)$-good polynomial $f$. None of the constructions in \cite{liu2018new} apply as we now explain. First of all, observe that $f$ cannot be  a composition of a non-trivial linearised polynomial and a power function, as its degree is $4$ and  $q$ is odd.
If $f$ was a power function, then (up to multiplication by a scalar) $f=x^4$, but then $4\nmid q-1$ and so $x^4-t_0$ is never totally split for any $t_0\in \vF_q$.
The following result provides a $(r,\ell)$-good polynomial with $\ell$ roughly of the size of $\lfloor{(q-1)/8}\rfloor$.

\begin{theorem}\label{thm:x4x2prediction}
Let $q\geq 5$ be an odd prime power. Let $a\in \vF_q^*$ and  $f=X^4+aX^2\in \vF_q[X]$. Then $\# G_f=8$ and
\begin{itemize}
\item if $-a/2$ is not a square in $\vF_q$, the polynomial $f$ is  $(3,\ell)$-good with $\ell$ at least 
\[\left\lceil{\frac{q+1}{8}-1}\right\rceil,\]
\item if $-a/2$ is a square in $\vF_q$, the polynomial $f$ is  $(3,\ell)$-good with $\ell$ at least 
\[\left\lceil{\frac{q+1}{8}-2}\right\rceil.\]
\end{itemize}
\end{theorem}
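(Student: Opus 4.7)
The plan is to apply Proposition~\ref{thm:estimate_totsplit}(ii) in the sharp form permitted by Remark~\ref{rem:genus_zero} (equivalently, to count $\vF_q$-rational places of the curve associated to $M_f$). I begin by exhibiting $f$ as a composition $f = g \circ h$ with $g(Y) = Y^2 + aY$, $h(X) = X^2$, and presenting $M_f$ as a tower
\[\vF_q(t) \subset \vF_q(u) \subset \vF_q(u,x) \subset \vF_q(u,x,w) = M_f,\]
where $u^2 = a^2 + 4t$, $x^2 = Y_1 := (u-a)/2$ (so that $x$ is a root of $f-t$), and $w^2 = -(x^2+a)$ (so that $w$ is a square root of the other root $Y_2 = -x^2-a$ of $g(Y) = t$). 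Each step is genuinely quadratic because $a^2+4t$ is a nonconstant linear polynomial in $t$, $(u-a)/2$ is a linear polynomial in $u$, and $-(x^2+a)$ is a squarefree degree-$2$ polynomial in $x$ with $a \neq 0$; none of these is a square in its ambient field. Hence $\#G_f = 8$.

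For the genus and field of constants, I use the relation $t = x^2(x^2+a)$ to observe that $\vF_q(x)$ is rational in $x$, and rewrite $M_f = \vF_q(x)[w]$ with $w^2 + x^2 + a = 0$. This is the function field of a smooth affine conic, and since every element of $\vF_q$ is a sum of two squares (standard pigeonhole argument using $(q+1)/2 + (q+1)/2 > q$), the conic admits an $\vF_q$-rational point. Therefore $M_f$ is a rational function field over $\vF_q$, giving $g_f = 0$, $k_f = \vF_q$, and $|\mathcal P^1(M_f)| = q+1$.

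I then identify the ramification. From $f'(X) = 2X(2X^2+a)$ the critical values of $f$ are $0$ and $-a^2/4$, giving three ramified places of $\vF_q(t)$ in $M_f:\vF_q(t)$: $t=0$, $t=-a^2/4$, $t=\infty$. A local walk through the tower at each one, checking at each quadratic step whether the Kummer radical reduces to a nonzero square in the relevant residue field, yields: above $t = 0$ there are $4$ or $0$ degree-$1$ places of $M_f$ according as $-a$ is or is not a square in $\vF_q$; above $t = -a^2/4$ there are $4$ or $0$ according as $-a/2$ is or is not a square; above $t = \infty$ there are $2$ or $0$ according as $-1$ is or is not a square in $\vF_q$. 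Together with the fact that an unramified place of $\vF_q(t)$ contributes $8$ degree-$1$ places of $M_f$ precisely when it is totally split (and $0$ otherwise), this gives
\[q+1 = 8\cdot \#T^1_\text{split}(f) + N_\text{ram},\]
where $N_\text{ram}$ is the sum of the three ramified contributions just computed.

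The conclusion is then immediate. If $-a/2$ is not a square in $\vF_q$, the contribution from $t=-a^2/4$ vanishes, so $N_\text{ram} \leq 4+0+2 = 6 < 8$; this gives $\#T^1_\text{split}(f) > (q+1)/8 - 1$, hence $\#T^1_\text{split}(f) \geq \lceil (q+1)/8 - 1\rceil$ by integrality. If $-a/2$ is a square then $N_\text{ram} \leq 4+4+2 = 10 < 16$, giving $\#T^1_\text{split}(f) > (q+1)/8 - 2$ and the stated bound. The principal obstacle is the local bookkeeping in the previous paragraph: three ramified base places times three quadratic tower steps, with the splitting behaviour ultimately controlled by the three Legendre symbols $\chi(-a)$, $\chi(-a/2)$, and $\chi(-1)$, and one must verify the residue-field reduction at each stage of the tower without miscounting splits versus inertia versus ramification.
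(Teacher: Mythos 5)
Your proof is correct, and while it operates in the same Galois-theoretic framework as the paper's, it diverges in three meaningful ways. First, for $\# G_f = 8$ the paper argues by contradiction, supposing $[M_f:\vF_q(t)]=4$ and showing $M_f$ would then equal the compositum $F_1F_2 = \vF_q\bigl(t,\sqrt{a^2+4t},\sqrt{-t}\bigr)$, which it then rules out by an unproved assertion that $\tfrac{-a+\sqrt{a^2+4t}}{2}$ is not a square there; your explicit quadratic tower $\vF_q(t)\subset\vF_q(u)\subset\vF_q(x)\subset\vF_q(x,w)$ with Kummer-radical checks at each step is cleaner and more self-contained. Second, instead of plugging $g_f=0$ into the inequality of Proposition \ref{thm:estimate_totsplit}(ii) and loosely bounding the ramified contribution, you exploit rationality of $M_f$ directly via the identity $q+1 = 8\cdot\#T^1_{\text{split}}(f) + N_\text{ram}$, which is exactly the sharpening that Remark \ref{rem:genus_zero} gestures at; this yields bounds that are in fact slightly stronger than the statement (e.g.\ $T \geq (q+1)/8 - 3/4$ when $-a/2$ is a nonsquare), which you then deliberately relax to recover the claimed ceilings. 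Third, and most substantively, your ramification analysis is more careful than the paper's: the paper writes $\Ram^1(M_f:\vF_q(t))=\{\infty,0\}$ or $\{\infty,0,b,-b\}$, but $b,-b$ are critical \emph{points} in $\vF_q(x)$, not degree-one places of the base $\vF_q(t)$; the correct branch locus is the three critical values $\{0,-a^2/4,\infty\}$ independent of any square condition, as you identify. The paper's misidentification happens to yield a valid (if not sharp) bound since it overshoots the actual $\mathcal R$-contribution in just the right way, but your place-by-place bookkeeping above $t=0$, $t=-a^2/4$, and $t=\infty$ — governed by $\chi(-a)$, $\chi(-a/2)$, and $\chi(-1)$ respectively — is the rigorous route and I verified all six cases. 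The only thing I would ask you to write out is the actual local walk through the tower at the ramified places, which you describe but do not display; the conclusions are all correct, but the claim currently reads as an assertion.
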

\begin{proof}
First we compute the splitting field of $f-t$. It is clear that 
\[M_f=\vF_q\left(t,\sqrt{\frac{-a+\sqrt{a^2+4t}}{2}},\sqrt{\frac{-a-\sqrt{a^2+4t}}{2}}\right).\] 

Since $\sqrt{\frac{-a+\sqrt{a^2+4t}}{2}}\sqrt{\frac{-a-\sqrt{a^2+4t}}{2}}=\sqrt{-t}$, then we have that 
\[M_f=\vF_q\left(t,\sqrt{-t},\sqrt{\frac{-a+\sqrt{a^2+4t}}{2}}\right).\] By the tower law we have that $\#G_f=[M_f:\vF_q(t)]\leq 8$.
In order to show that $[M_f:\vF_q(t)]=8$ it is enough to show that 
$\sqrt{-t}\notin \vF_q\left(t,\sqrt{\frac{-a+\sqrt{a^2+4t}}{2}}\right)$. By contradiction, suppose that $\sqrt{-t}\in \vF_q\left(t,\sqrt{\frac{-a+\sqrt{a^2+4t}}{2}}\right)$. This can happen only when $\vF_q\left(t,\sqrt{\frac{-a+\sqrt{a^2+4t}}{2}}\right)$ is Galois over $\vF_q(t)$, and therefore $[M_f:\vF_q(t)]=4$. 
Consider the subfields 
$F_1=\vF_q\left(t,\sqrt{a^2+4t}\right)$ and $F_2=\vF_q\left(t,\sqrt{-t}\right)$. $F_1$ and $F_2$ are distinct and satisfy $[F_1:\vF_q(t)]=[F_2:\vF_q(t)]=2$. 
By contradiction, let us assume $[M_f:\vF_q(t)]=4$. One observes that, since  $F_1\cap F_2=\vF_q(t)$ then $[F_1F_2:\vF_q(t)]=4$, and therefore $M_f$ can be written as a compositum of $F_1$ and $F_2$, i.e. $M_f=F_1F_2=\vF_q\left(t,\sqrt{a^2+4t},\sqrt{-t}\right)$. But one can show directly that $\frac{-a+\sqrt{a^2+4t}}{2}$ is not a square in $F_1F_2$, so $F_1F_2$ does not contain all the roots of $f-t$ and therefore cannot be its splitting field.

We now compute $g_f$ and $k_f$.
For simplicity let us denote $x=\sqrt{\frac{-a+\sqrt{a^2+4t}}{2}}$ and $y=\sqrt{\frac{-a-\sqrt{a^2+4t}}{2}}$. It is clear that $M_f=\vF_q(t,x,y)$.
Also, $\vF_q(t,x,y)=\vF_q(x,y)$ since $t$ can be obtained using only $x$ and elementary $\vF_q$-operations that do not involve $t$. The elements $x$ and $y$ verify the equation $x^2+y^2+a=0$, which is a conic so its genus is zero. 
Another way to see the fact that $g_f=0$ is to use the inequality in \cite[Proposition 3.11.5]{stichtenoth2009algebraic} on the equation $x^2+y^2+a=0$.
Moreover by \cite[Corollary 3.6.8]{stichtenoth2009algebraic}, $k_f=\vF_q$ because $x^2+y^2+a$ is absolutely irreducible if $a\neq 0$.

Let us now compute $\# \Ram^1(M_f:\vF_q(t))$. Since the ramified places of degree $1$ correspond to the zeroes in $\vF_q$ of the derivative of $f$ (the place at infinity is always ramified), it is easy to see that $\Ram^1(M_f:\vF_q(t))=\{\infty,0\}$ if $-a/2$ is not a square in $\vF_q$ and $\Ram^1(M_f:\vF_q(t))=\{\infty,0,b,-b\}$ otherwise, where $b\in \vF_q$ is an element such that $b^2=-a/2$.
A direct application of Proposition \ref{thm:estimate_totsplit}  gives now the wanted result.

\end{proof}

Let us finish with an example of a $(5,\ell)$-good polynomial. Again, let us explain why this is a new example of a good polynomial. Fix the size of the base field to be $q$ and assume that one wants to construct an LRC with locality $5$.
Suppose that $6\nmid q-1$ and $q=p^m$ is not divisible by $2$ or $3$. Then one would need a degree $6$ polynomial such that $f-t_0$ is totally split for many $t_0$'s in $\vF_q$.
But then, none of the constructions in Proposition \ref{thm:previous_constructions} will work, nor compositions of those: in fact $f$ cannot be a composition (possibly trivial) of a $p$-linearised polynomial with a power function for degree reasons, and also cannot be a power function because $6\nmid q-1$, and therefore $x^6$ is not a good polynomial over $\vF_q$.

\begin{theorem}\label{thm:degree_6_pred}
Let $q$ be an odd prime power such that $q\not\equiv 0 \mod 2,3$ and $a\in \vF_q^*$ such that $a$ in not a square. Let $f=(X^3-aX)^2\in \vF_q[X]$.
Then $\#G_f=12$ and $f$ is a $(5,\ell)$-good polynomial with $\ell$ at least
\[\left\lceil{\frac{q+1-2\sqrt{q}}{12}-2}\right\rceil.\]
\end{theorem}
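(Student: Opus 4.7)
The plan is to apply Proposition \ref{thm:estimate_totsplit}(ii), which requires computing $\#G_f$, verifying $k_f=\vF_q$, and bounding $g_f$ and $\#\Ram^1(M_f:\vF_q(t))$. The key structural observation is that $f(X)=h(g(X))$ with $g(X)=X^3-aX$ and $h(X)=X^2$. Setting $s=\sqrt{t}$, we have $f(X)-t=(X^3-aX-s)(X^3-aX+s)$ in $\vF_q(s)[X]$, and since the roots of the second factor are the negatives of those of the first, the splitting field of $f-t$ over $\vF_q(s)$ coincides with that of the cubic $X^3-aX-s$. I would first show this cubic is irreducible over $\vF_q(s)$ via a standard degree-comparison argument on a putative rational root $p(s)/q(s)$, and then that its discriminant $4a^3-27s^2$ is not a square in $\vF_q(s)$: comparing degrees, any square-root in $\vF_q(s)$ must be a polynomial $bs+c\in\vF_q[s]$, forcing $b^2=-27$, $2bc=0$, $c^2=4a^3$, whence $a=0$, a contradiction. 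Hence the Galois group of the cubic over $\vF_q(s)$ is $\mathcal S_3$, giving $\#G_f=[M_f:\vF_q(t)]=2\cdot 6=12$.

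Next I would identify $M_f$ explicitly as a function field. Fixing a root $x$ of $f-t$ so that $t=(x^3-ax)^2$ in $\vF_q(x)$, one has $M_f=\vF_q(x)(\sqrt{4a^3-27t})$. A direct polynomial computation gives
\[
4a^3-27(x^3-ax)^2=-(3x^2-a)^2(3x^2-4a),
\]
so after pulling the square factor out of the radical, $M_f=\vF_q(x,y)$ with $y^2=4a-3x^2$. This is the function field of the smooth plane conic $y^2+3x^2=4a$, which is non-degenerate because $\gcd(q,6)=1$ and $a\neq 0$. Chevalley--Warning applied to the ternary quadratic form $y^2+3x^2-4az^2$ provides a non-trivial $\vF_q$-rational point on the conic, whence $M_f$ is $\vF_q$-isomorphic to a rational function field. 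In particular, $k_f=\vF_q$ and $g_f=0$.

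Finally, since $M_f$ is the Galois closure of $\vF_q(x):\vF_q(t)$, the places of $\vF_q(t)$ ramified in $M_f$ agree with those ramified in $\vF_q(x):\vF_q(t)$, i.e.\ the critical values of $f$ together with $\infty$. From $f'(X)=2X(X^2-a)(3X^2-a)$ the critical points in $\overline{\vF_q}$ are $\{0,\pm\sqrt{a},\pm\sqrt{a/3}\}$, with critical values $\{0,\,4a^3/27\}\subseteq\vF_q^*$ (using $a\neq 0$ and $\Char \vF_q\neq 3$). Hence $\#\Ram^1(M_f:\vF_q(t))=3\leq 4$, and substituting $\#G_f=12$, $g_f=0\leq 1$, and $\#\Ram^1\leq 4$ into Proposition \ref{thm:estimate_totsplit}(ii) yields the stated bound $\ell\geq\lceil(q+1-2\sqrt{q})/12-2\rceil$. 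The step I expect to be the main obstacle is spotting the identity that extracts the perfect square $(3x^2-a)^2$ from the discriminant; without this factorisation one would face a genuinely degree-$6$ radicand, a higher-genus cover, and a substantially weaker estimate.
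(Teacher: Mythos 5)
Your proof is correct, and on the back half of the argument it takes a genuinely different and in fact sharper route than the paper's. The computation of $\#G_f=12$ is essentially the same in both treatments: factor $f-t=(X^3-aX-\sqrt t)(X^3-aX+\sqrt t)$ over $\vF_q(\sqrt t)$, show the cubic has Galois group $\cS_3$ by irreducibility plus the non-square discriminant $4a^3-27t$, and apply the tower law. Where you diverge is in handling $g_f$, $k_f$ and the ramified places. The paper bounds $g_f\leq 1$ via Riemann's inequality applied to $M_f=\vF_q(y,z)$ for two roots $y,z$ of the cubic, and proves $k_f=\vF_q$ by re-running the degree count over $k_f(t)$ and using $\Gal(M_f:k_f(t))\subseteq G_f$; for ramification it counts zeros of $f'$ in $\vF_q$ and obtains $\#\Ram^1\leq 4$. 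You instead exploit the identity
\[
4a^3-27(x^3-ax)^2=-(3x^2-a)^2(3x^2-4a),
\]
which extracts the perfect square from the radicand and shows $M_f$ is the function field of the smooth conic $y^2+3x^2=4a$; Chevalley--Warning then gives a rational point, hence $M_f$ is a rational function field, so $g_f=0$ and $k_f=\vF_q$ simultaneously. You also correctly distinguish critical values from critical points: the five critical points $\{0,\pm\sqrt a,\pm\sqrt{a/3}\}$ collapse to only two finite critical values $\{0,4a^3/27\}$, so $\#\Ram^1=3$ exactly rather than merely $\leq 4$. This buys a strictly stronger intermediate bound $\ell\geq\lceil(q+1)/12-3/2\rceil$, which you then deliberately weaken to recover the theorem as stated. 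The only cosmetic slip is writing $\{0,4a^3/27\}\subseteq\vF_q^*$ (of course $0\notin\vF_q^*$; you mean $\vF_q$), which does not affect the argument.
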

\begin{proof}
First, we need to compute $M_f$, the splitting field of $f-t$. Observe that $\sqrt{t}\in M_f$ and that adding $\sqrt{t}$ to $\vF_q(t)$ allows the splitting
\[f-t=(X^3-aX-\sqrt{t})(X^3-aX+\sqrt{t}).\]
Set $H_1=X^3-aX-\sqrt{t}$ and $H_2=X^3-aX+\sqrt{t}$.
We need now to split $H_1$ over $\vF_q(\sqrt{t})$, because in that case also $H_2$ splits as $H_2(X)=-H_1(-X)$.
Since $H_1(X)$ is irreducible over $\vF_q(\sqrt{t})$, then $N=\Gal(M_f:\vF_q(\sqrt{t}))$ is equal to $\cS_3$ or $\mathcal A_3$. But since one can check directly that the discriminant $\Delta=4a^3-27t$ of $H_1$ over $\vF_q(\sqrt{t})$ is not a square in $\vF_q(\sqrt{t})$, then $N\neq \mathcal A_3$, which forces $N=\cS_3$ and therefore $[M_f:\vF_q(\sqrt{t})]=\# N=6$. By the tower law we have 
\[\#G_f=[M_f:\vF_q(t)]= [M_f:\vF_q(\sqrt{t})]\cdot [\vF_q(\sqrt{t}):\vF_q(t)]=12.
\]
We now want to show that $k_f=\vF_q$ so that we can apply point (ii) of Proposition \ref{thm:estimate_totsplit}. But this is completely obvious because one can consider $\Gal(M_f:k_f(t))$, as $k_f(t)$ is clearly a subfield of $M_f$, and exactly the same arguments as above apply. This shows that 
\[\#\Gal(M_f:k_f(t))=[M_f:k_f(t)]= [M_f:k_f(\sqrt{t})]\cdot [k_f(\sqrt{t}):k_f(t)]=12.
\]
Since $\Gal(M_f:k_f(t))\subseteq G_f$ we must have equality, and therefore $k_f(t)=\vF_q(t)$, which in turn forces $k_f=\vF_q$.

Let $y$ and $z$ be two roots of $H_1$. Observe that $M_f=\vF_q(t,\sqrt{t},y,z)=\vF_q(y,z)$ as $\sqrt{t}$ (and so $t$) can be obtained by evaluating $ X^3-aX$ at $y$ for example, and adding two roots of $H_1$ is enough to obtain the third root using field operations. Then we have $g_f\leq 1$ by Riemann's inequality \cite[Corollary 3.11.4]{stichtenoth2009algebraic} (the minimal polynomial of $y$ over $\vF_q(z)$ has degree at most $2$ and viceversa).

Let us now compute $\Ram^1(M_f:\vF_q(t))$. As usual, we look at the number of zeroes in $\vF_q$ of the derivative $f'(X)=2(X^3-aX)(3X^2-a)$. Since $a$ is not a square, $f'(X)$ has at most $3$ zeroes in $\vF_q$, depending on whether $3$ is a square or not in $\vF_q$. Since the place at infinity is always ramified we have that $\Ram^1(M_f:\vF_q(t))$ consists of at most $4$ places.

Plugging everything in the formula off (ii) of Proposition \ref{thm:estimate_totsplit} we get the wanted result.

\end{proof}

\begin{remark}
To further limit the size of the ramified places, in Theorem \ref{thm:degree_6_pred} we could also impose that $a/3$ is not a square (but this would restrict the fields where the theorem holds to the ones where $3$ is a square).
\end{remark}

\begin{remark}
As we already observed just before the statement of Theorem \ref{thm:degree_6_pred}, for infinitely many base fields $\vF_q$ none of the specific constructions using monomial and linearised polynomial would work. Also, \cite[Proposition 3.4]{tamo2014family} would only give an $(5,\ell)$-good polynomial with $\ell=\lceil {q\choose 6}/q^5 \rceil\leq \lceil q/720 \rceil$. With our construction we obtain polynomials with a splitting covering of size $\ell$ asymptotic to $q/12$, in turn obtaining optimal codes with same locality $5$ but with dimension and length roughly $60$ times larger (of course, if one wanted one can always drop some elements of the splitting covering to obtain a smaller code). Notice that $60$ is the index of $G_{(x^3-ax)^2}$ in $S_{6}$, which in fact relates directly to the dimension boost compared with \cite[Proposition 3.4]{tamo2014family}.
\end{remark}

\begin{remark}
Notice that having Galois group of order $12$ for a polynomial of degree $6$ is highly non-generic, as the Galois group of a random degree $6$ polynomial will be $\mathcal S_6$, which has cardinality $720$. Moreover, the  polynomial $f-t$ has a small Galois group even among all polynomials which are compositions of degree $3$ and degree $2$ polynomials: the generic condition is in fact to have a Galois group of size $72$. To see this, it is enough to notice that in the proof of Theorem \ref{thm:degree_6_pred} when we add all the roots of $H_1$, then $H_2$ splits because of the particular form of $f$.
More in general, notice that the index of $G_f$ in $\cS_{\deg(f)}$ gives essentially the measure of the advantage of using a polynomial with Galois group $G_f$ instead of one given by \cite[Proposition 3.4]{tamo2014family}.
\end{remark}

\section{Examples}\label{sec:examples}

In this section we see how the estimates given in our results agree with the actual 
number of totally split places. The computations were performed in  SAGE \cite{sagemath} and the code is available upon request.

Tables \ref{table_of_x3_smt_2} and \ref{table_of_x3_smt_5} show the agreement of the estimate in Theorem \ref{thm:degree_3_pred} with the actual number of totally split places. This agreement is asymptotic in $q$, but one can see some small perturbations due to the fact that $g_f=1$ for the polynomial we looked at. 
The polynomial lead to LRC optimal codes with locality $2$.

\begin{table}[H]
\caption{}
\begin{subtable}{5 cm}
\begin{tabular}{c|c|c}
	$q=2^{3n}$ & $\#T^1_{\text{split}}$ & lower bound \\
	\hline               
      8&     1&    0\\                  
     64&    10&    8\\                  
    512&    85&   77\\                  
   4096&   682&  661\\                  
  32768&  5461& 5401\\                  
\hline
\end{tabular}
\caption{Let $a\in \vF_8$ be a zero of $X^3+X+1$. The table compares $\#T^1_{\text{split}}(X(X+1)(X+a))$ with the lower bound given by Theorem \ref{thm:degree_3_pred} over some base fields}\label{table_of_x3_smt_2}
\end{subtable}
\quad \quad \quad \quad \quad\quad\quad
\begin{subtable}{5 cm}
\begin{tabular}{c|c|c}
	$q=5^n$ & $\#T^1_{\text{split}}$ & lower bound \\
	\hline               
     5&     1&     0\\
    25&     3&     2\\
   125&    21&    17\\
   625&   103&    95 \\
  3125&   521&   502 \\
 15625&  2603&  2562\\
 78125& 13021& 12927\\                                   

   \hline
\end{tabular}
\caption{Comparing $\#T^1_{\text{split}}(X(X+1)(X+3))$ with the lower bound given by Theorem \ref{thm:degree_3_pred} over many base fields}\label{table_of_x3_smt_5}
\end{subtable}
\end{table}

Table \ref{table_of_x4x2} shows almost a perfect agreement, which comes from the genus zero condition, which makes Chebotarev's error term a $O(1)$. The polynomial lead to LRC optimal codes with locality $3$.

In practice, for a random polynomial having a totally split place, the estimate  $q/(r+1)!$ for the number of totally split places seem to hold most of the time. 
Table \ref{table_of_x5_smt} shows the behaviour of a totally split polynomial of degree $5$. The polynomial lead to LRC optimal codes with locality $4$.

\begin{table}[H]
\caption{}
\begin{subtable}{5 cm}
\begin{tabular}{c|c|c}
	$q$ & $\#T^1_{\text{split}}$ & lower bound \\
	\hline               
  125& 15& 14\\                  
  127& 15& 14\\                  
  131& 16& 15\\                  
  137& 16& 16\\                  
  139& 17& 16\\                   
  149& 18& 18\\                  
  151& 18& 17\\                  
   \hline
\end{tabular}
\caption{Comparing $\#T^1_{\text{split}}(X^4+7X^2)$ with the prediction given by Theorem \ref{thm:x4x2prediction} over some base fields}\label{table_of_x4x2}
\end{subtable}
\quad \quad \quad \quad \quad\quad\quad
\begin{subtable}{5 cm}
\begin{tabular}{c|c|c}
	$q$ & $\#T^1_{\text{split}}$ & prediction: $\left\lceil{\frac{q+1}{(r+1)!}}\right\rceil$ \\
	\hline
 1787& 17& 15\\
 1789& 11& 15\\
 1801& 17& 16\\
 1811& 15& 16\\
 1823& 17& 16\\
 1831& 21& 16\\
 1847& 17& 16\\
 1849& 23& 16\\
 1861& 11& 16\\
   \hline
\end{tabular}
\caption{Comparing $\#T^1_{\text{split}}(X(X-1)(X-2)(X-3)(X-4))$ with the lower bound given by Theorem \ref{thm:pessimistic} over some base fields.}\label{table_of_x5_smt}
\end{subtable}
\end{table}

Tables \ref{table_of_x6} and \ref{table_of_x6_non_prime} show the agreement between the lower bound of Theorem \ref{thm:degree_6_pred} and the actual number of totally split places of $f$.

\begin{table}[H]
\caption{}
\begin{subtable}{5 cm}
\begin{tabular}{c|c|c}
	$q$ & $\#T^1_{\text{split}}$ & lower bound \\
	\hline               
 241& 20& 16\\
 263& 22& 18\\
 313& 26& 22\\
 347& 29& 24\\
 349& 29& 25\\
 359& 30& 25\\
 397& 33& 28\\

   \hline
\end{tabular}
\caption{Comparing $\#T^1_{\text{split}}((X^3+7X)^2)$ with the lower bound given by Theorem \ref{thm:degree_6_pred} over some prime base fields}\label{table_of_x6}
\end{subtable}
\quad \quad \quad \quad \quad\quad\quad
\begin{subtable}{5 cm}
\begin{tabular}{c|c|c}
	$q$ & $\#T^1_{\text{split}}$ & lower bound \\
	\hline                                                 
343& 28& 24\\ 
2197& 182& 174\\
16807& 1400& 1378\\  
   \hline
\end{tabular}
\caption{Comparing $\#T^1_{\text{split}}((X^3+5X)^2)$ with the lower bound given by Theorem \ref{thm:degree_6_pred} over some large non prime base fields}\label{table_of_x6_non_prime}
\end{subtable}

\end{table}

\begin{remark}
The reader should notice how fixing the polynomials in Tables \ref{table_of_x4x2} \ref{table_of_x6} and \ref{table_of_x6_non_prime} leads to optimal codes of large length and dimension, independently of the arithmetic properties of the pair $(q,r)$. 
See for example the case of Table \ref{table_of_x4x2} for $q=151$. We want to build an LRC of locality $3$ over $\vF_{151}$. As already observed just before the statement of Theorem \ref{thm:x4x2prediction}, with the arithmetic constraint $q\equiv 3 \mod 4$, it is not possible to use the explicit constructions listed in \cite[Section III]{liu2018new}. So we are left with \cite[Proposition 3.4]{tamo2014family} with $r=3$. We compute $\lceil{151\choose 4}/151^3\rceil=7$ which only ensures a $(3,7)$-good polynomial, and therefore using Theorem \ref{thm:LRCconstruction} we get LRCs with parameters $(4\ell,3t,3)$, with $\ell\leq 7$ and $t\leq \ell$, so these LRCs over $\vF_{151}$ have length up to $28$ and dimension up to $21$.

By using our method we get a codes of much larger dimension and length. In this case, using the polynomial $X^4+7X^2$ and Theorem \ref{thm:x4x2prediction}, we get LRCs with parameters $(4\ell,3t,3)$, where now $\ell\in \{1,\dots, 17\}$ and $ t\leq \ell$, so we can get optimal LRCs of length up to $68$, and dimension up to $51$.

\end{remark}

\section{Further Research}

In this paper we constructed new LRCs by fitting the theory of good polynomials in a Galois theoretical context via Chebotarev Density Theorem. The theory we developed includes all the previous constructions and shows that producing new good polynomials can be reduced to a Galois theoretical problem over global function fields:
\begin{problem}
Find polynomials $f\in\vF_q[X]$ such that 
\begin{itemize}
\item the splitting field $M_f$ of $f-t$ over the rational function field $\vF_q(t)$ has full constant field $\vF_q$
\item the Galois group of $f-t$ over $\vF_q(t)$ is as small as possible when compared with the Galois group of polynomials of the same degree.
\end{itemize}
\end{problem}
The first point is needed to be able to find totally split places. The second point is needed to have large length and dimension of the LRC arising from the polynomial.
We solve some instances of this problem to show how effective and practical the method is, which in turn allows us to build new good polynomials over base fields where the known constructions do not work.
As an immediate  application of Theorem \ref{thm:LRCconstruction}, this allows to construct LRCs with large length and dimension, over a fixed base field.
Also, the method produces existential and constructive results  for any totally split polynomial of fixed degree over any large base field without arithmetic restrictions on the pair $(q,r)$, as Theorem \ref{thm:pessimistic} provides an effective version of \cite[Proposition 3.4]{tamo2014family}.

\section{Acknowledgements}

We would like to thank Federico Amadio Guidi and Violetta Weger for checking the preliminary version of this manuscript.
The author is thankful to the Swiss National Science Foundation grant number 171248 and 171249.

\bibliographystyle{plain}
\bibliography{biblio2}
\end{document}